\documentclass[journal]{IEEEtran}
\usepackage{pifont}
\usepackage{}
\usepackage{mathrsfs}
\usepackage[noadjust]{cite}
\usepackage{graphicx}
\usepackage{float}
\usepackage{subfigure}
\usepackage{amsthm}
\usepackage{amsmath}
\usepackage{bm}
\usepackage{enumerate}
\usepackage{amssymb}
\usepackage{fixltx2e}
\usepackage{color}
\usepackage{hhline}
\usepackage{algpseudocode}
\usepackage{amsmath}
\usepackage{graphics}
\usepackage{epsfig}
\usepackage{setspace}
\usepackage{multirow}
\usepackage{ragged2e}
\usepackage{makecell}
\usepackage{pifont}

\usepackage{bbding}
\usepackage[ruled]{algorithm2e}

\usepackage{graphics}
\usepackage{epsfig}
\newtheorem{lemma}{Lemma}

\allowdisplaybreaks[4]
\setlength{\textfloatsep}{5pt}
\hyphenation{op-tical net-works semi-conduc-tor}
\newtheorem{theorem}{Theorem}


%
%

\begin{document}

\title{AmBC-NOMA-Aided Short-Packet Communication for High Mobility V2X Transmissions}
\author{Xinyue~Pei, Xingwei~Wang, Yingyang~Chen,~\IEEEmembership{Senior Member,~IEEE,}\\Tingrui Pei, 	Miaowen~Wen,~\IEEEmembership{Senior Member,~IEEE}\\
	\thanks{X. Pei and X. Wang are with School of Computer Science and Engineering, Northeastern University, Shenyang 110819, China (e-mail:
		peixy@cse.neu.edu.cn, wangxw@mail.neu.edu.cn).}
	\thanks{Y. Chen and T. Pei are with the 
		College of Information Science and Technology, Jinan University, Guangzhou 510632, China (e-mail:
		chenyy@jnu.edu.cn; tingruipei@jnu.edu.cn). }
	\thanks{M. Wen is with the National Engineering Technology Research Center for Mobile Ultrasonic Detection, South China University of Technology, Guangzhou 510640, China (e-mail:
		eemwwen@scut.edu.cn).}
}
\markboth{SUBMITTED TO IEEE Wireless Communications Letters}{}
\maketitle

\begin{abstract}	
In this paper, we investigate the performance of  ambient backscatter communication  non-orthogonal multiple access (AmBC-NOMA)-assisted short packet communication for  high-mobility vehicle-to-everything  transmissions. In the proposed system, a roadside unit (RSU) transmits a superimposed signal to a typical NOMA user pair. Simultaneously,  the backscatter device (BD)  transmits its own signal towards the user pair by reflecting and modulating the RSU's superimposed signals. Due to vehicles' mobility, we consider 
 realistic assumptions of time-selective fading and channel estimation 
errors. Theoretical  expressions for the average block error rates (BLERs) of both users are derived. Furthermore, analysis and insights on  transmit signal-to-noise ratio, vehicles' mobility, imperfect channel estimation, the reflection efficiency at the BD, and blocklength  are provided. 
Numerical results validate the theoretical findings and reveal that the AmBC-NOMA system outperforms its orthogonal multiple access counterpart in terms of BLER performance. 
\end{abstract}

\begin{IEEEkeywords}
Non-orthogonal multiple access, backscatter communications, short-packet communications, time-selective fading, imperfect channel estimates.
\end{IEEEkeywords}
\maketitle
\section{Introduction}
With the development of intelligent transportation, vehicle-to-everything (V2X) communication has become extremely essential \cite{noor20226g}. 
However, as the number of vehicles has grown explosively, V2X needs to address several critical challenges,  such as massive access, spectrum and energy shortages.  Unfortunately, traditional  orthogonal multiple access (OMA)-based V2X networks may face congestion issues in such scenarios \cite{di2017v2x}. Compared with OMA,  the non-orthogonal multiple access (NOMA) technique, employing superposition coding at the transmitter and successive interference cancellation (SIC) at the receiver, has demonstrated its capability to enhance the spectral efficiency (SE) and meet the demands of massive connectivity   \cite{lai2019cooperative}. Consequently, integrating NOMA into V2X communications has gained significant attention.

To further meet the spectral and energy efficiency (EE) requirements of NOMA-based V2X systems, ambient backscatter communications (AmBC) has emerged as a  promising technology.
In AmBC systems, a backscatter device (BD) can reflect the signal transmitted by an ambient radio frequency (RF) source to the receiver, and it can overlap its own additional signal onto the existing RF signal through modulation, without the need for power-hungry active components \cite{liu2013ambient}. 
In this way, AmBC offers benefits such as high SE and EE, as well as flexible deployment \cite{yang2018cooperative}. Clearly, the combination of AmBC and NOMA is beneficial. In AmBC-NOMA, the BD's additional signal is decoded only by the near user through the SIC process, while it acts as interference for the far user \cite{zhang2019backscatter}.

Recent researches \cite{khan2021backscatter,zheng2022overlay,khan2022energy,peng2023ambient,ihsan2023energy} show that  AmBC-NOMA provides a feasible and promising solution in V2X. In \cite{khan2021backscatter}, the authors maximized the max-min achievable capacity of AmBC-NOMA V2X networks. In \cite{zheng2022overlay}, the authors elaborated on the secrecy performance for cognitive AmBC-NOMA V2X networks. In \cite{khan2022energy}, the authors investigated the EE optimization problem for AmBC-NOMA V2X communication  with imperfect channel state information (CSI) estimation. Furthermore, in \cite{peng2023ambient}, the authors studied the covertness performance of an AmBC-NOMA vehicular network. Finally, in \cite{ihsan2023energy}, the authors optimized the EE for AmBC-NOMA V2X sensor communications with imperfect CSI estimation.

Notably, prior works did not consider high-mobility scenarios. Unfortunately, in high-mobility scenarios, the mobility of vehicles and Doppler spread effects result in both imperfect and outdated CSI estimations, as well as time-selective fading \cite{khattabi2015performance}. Furthermore, existing works assumed a conventional infinite blocklength (IBL) transmission regime. Nevertheless, IBL codes are no longer appropriate for high-mobility scenarios \cite{xia2023noma} considering the inherent requirements for ultra-reliablity and low-latency. Instead, short-packet communication (SPC) with finite blocklength (FBL) codes has emerged as a physical-layer solution. Unlike long-packet  and IBL communication, for SPC, Shannon's channel capacity becomes inaccurate, and the block error rate (BLER) cannot be ignored \cite{durisi2016toward}. Hence, it is pivotal to analyze the BLER performance in AmBC-NOMA high-mobility V2X systems.

Apparently, the study of the FBL performance in the context of AmBC-NOMA high-mobility V2X systems is still in its infancy.  To address this gap,  we  evaluate the BLER performance in AmBC-NOMA systems over time-selective fading channels.  The key contributions are listed as follows:
\begin{itemize}
	\item  We explore the application of AmBC-NOMA-assisted SPC in high-mobility V2X systems, considering  imperfect and outdated estimation processes as well as time-selective fading. 
	\item We study the statistics of the received signal-to-interference-plus-noise ratios (SINRs), extracting closed-form expressions for the cumulative distribution function (CDF) of received SINRs.
	Then, utilizing those expressions, we first derive theoretical expressions for the average BLERs in the proposed systems.
		\item The accuracy of the derived results is corroborated through simulations. Additionally, numerical results investigate the impact of various key parameters.  Furthermore, the simulations indicate that our proposed AmBC-NOMA system achieves superior BLER performance compared to AmBC-OMA.
\end{itemize}
\emph{Notation}: The operations $\Pr(\cdot)$, $|\cdot|$,  and $\mathbb{E}\{\cdot\}$ denote the probability, the absolute value,  and the expectation,
	respectively. $F_X(\cdot)$ and $f_X(\cdot)$ respectively denote the CDF and probability density function (PDF) of a random variable $X$.
	A complex Gaussian distribution $Y$ with zero mean and variance $\Omega$ is represented by $Y\sim\mathcal{CN}(0,\Omega)$, with $F_{|Y|^2}(x)=1-\exp(-\frac{x}{\Omega})$ and $f_{|Y|^2}(x)=\frac{1}{\Omega}\exp(-\frac{x}{\Omega})$.   $Q(x)=\int_{x=0}^{\infty}\frac{1}{2\pi}\exp(\frac{-t^2}{2})$ is  the Gaussian Q-function, $K_v(\cdot)$ is the modified bessel function of the second kind,
	 and $\text{Ei}(\cdot)$ is the exponential integral function  \cite{gradshteyn2014table}.

\section{System Model}

In this work, we consider an AmBC-NOMA-assisted  high-mobility V2X scenario as depicted in Fig. \ref{system_model}, which consists of a roadside unit (RSU), a BD, a near vehicular user (denoted by $U_N$), and a far vehicular user (denoted by $U_F$).\footnote{Due to the strong interference between users, it may be challenging to jointly apply NOMA to all users in practice. A feasible alternative is to divide users into orthogonal pairs, where  
	each pair performs NOMA. Therefore, this letter focuses on  a typical 
	NOMA user pair to illustrate the effectiveness of the
	proposed scheme.} All the nodes are equipped with single antenna.

\subsection{High-mobility and Time-selective Channel Modeling}
It is assumed that all channels are subject to Rayleigh fading,\footnote{This assumption is widely used in AmBC-NOMA V2X networks \cite{khan2021backscatter,zheng2022overlay,khan2022energy,peng2023ambient,ihsan2023energy}.} and the channel coefficients from RSU to $U_N$, $U_F$, and BD are respectively denoted by $h_{RN}\sim\mathcal{CN}(0,\Omega_{RN})$, $h_{RF}\sim\mathcal{CN}(0,\Omega_{RF})$, and $h_{RB}\sim\mathcal{CN}(0,\Omega_{RB})$. Similarly, the channel coefficients from BD to $U_N$ and $U_F$ are respectively denoted by $h_{BN}\sim\mathcal{CN}(0,\Omega_{BN})$ and $h_{BF}\sim\mathcal{CN}(0,\Omega_{BF})$. The aforementioned $\Omega$ represents the average channel power.

For convenience, we assume that users are driving in the same direction with speed $v$ km/h on a
 highway.
Due to the high-mobility nature of users, the corresponding fading channel  $h_k$ ($k\in\{RN,RF,BN,BF\}$) is assumed to be time-selective.\footnote{
	Notably, since the BD is pre-positioned at a fixed location, we assume that $h_{RB}$ is static, so that we do not consider imperfect CSI for it. } To model the channel variation over time, the first-order auto-regressive process \cite{khattabi2015performance} is employed, and $h_k(t)$ over the $t$-th time instant can be described as
\begin{small}
	\begin{align}\label{AR1model}
		h_k(t)=\rho_kh_k(t-1)+\sqrt{1-\rho_k^2}e_k(t),
	\end{align}
\end{small}where $e_k(t)\sim\mathcal{CN}(0,\Omega_{e k})$ is the time-varying component of the channel, and $\rho_k$ is the correlation parameter according to Jakes' model. Specifically,  we have
$
	\rho_k=\mathcal{J}_0\left({2\pi f_D T_s}\right)
$.
Here $\mathcal{J}_0(\cdot)$ is the zeroth-order Bessel function of the first kind, $T_s$ is the transmitted symbol duration, and $f_D=f_cv/c$ is the Doppler frequency shift for the vehicle with speed $v$, where $f_c$ represents the carrier frequency and $c$ is the speed of light.

Time-selective $h_k$ undergoes rapid changes, posing significant challenges for receivers to achieve perfect channel estimation.  Furthermore,  real-time tracking of these channels is difficult, making it hard for receivers to estimate the channel at every time instant.
 Therefore, in this scenario, we assume that $U_k$ only estimate the CSI  at the first time instant of each coherence time \cite{khattabi2015performance}. Using the minimum mean square error (MMSE) model, $h_k(1)$ can be expressed as  
\begin{small}
\begin{align}\label{estimate_equation}
	h_k(1)=\hat{h}_k(1)+\epsilon_k(1),
\end{align}
\end{small}where
$\hat{h}_k(1)\sim\mathcal{CN}(0,\hat{\Omega}_{k})$ is the estimated CSI, $\epsilon_k(1)\sim\mathcal{CN}(0,\Omega_{\epsilon k})$ is the estimation error. With the aid of (\ref{AR1model}) and (\ref{estimate_equation}), we can achieve \cite{khattabi2015performance,xia2023noma}:
\begin{small}
\begin{align}\label{hk_define}
	h_k(t)=\underbrace{\rho_k^{t-1}\hat{h}_k(1)+\varphi_k(t)}_{\hat{h}_k(t)}+\underbrace{\rho_k^{t-1}\epsilon_k(1)}_{\epsilon_k(t)},
\end{align}
\end{small}where  $
	\varphi_k(t)=\sqrt{1-\rho_k^2}\sum_{i=1}^{t-1}\rho_k^{t-1-i}e_k(m) $ represents user mobility noise and  follows the distribution $\mathcal{CN}(0,(1-\rho_k^{2(t-1)})\Omega_{e k})$.

For calculation convenience, we define $\xi_k\triangleq \varphi_k(t)+\rho_k^{t-1}\epsilon_k(1)$ as the effective noise caused by estimation error and user mobility. Obviously, $\xi_k$ is a complex Gaussian variable following $\mathcal{CN}(0,\Omega_{\xi k})$,  where $\Omega_{\xi k}\triangleq(1-\rho_k^{2(t-1)})\Omega_{e k}+\rho_k^{2(t-1)}\Omega_{\epsilon k}$, then (\ref{hk_define}) be rewritten as
\begin{small}
\begin{align}\label{hk_simple}
	h_k(t)=\rho_k^{t-1}\hat{h}_k(1)+\xi_k.
\end{align}
\end{small}

\begin{figure}[t]
	\centering
	\includegraphics[width=2.2in]{./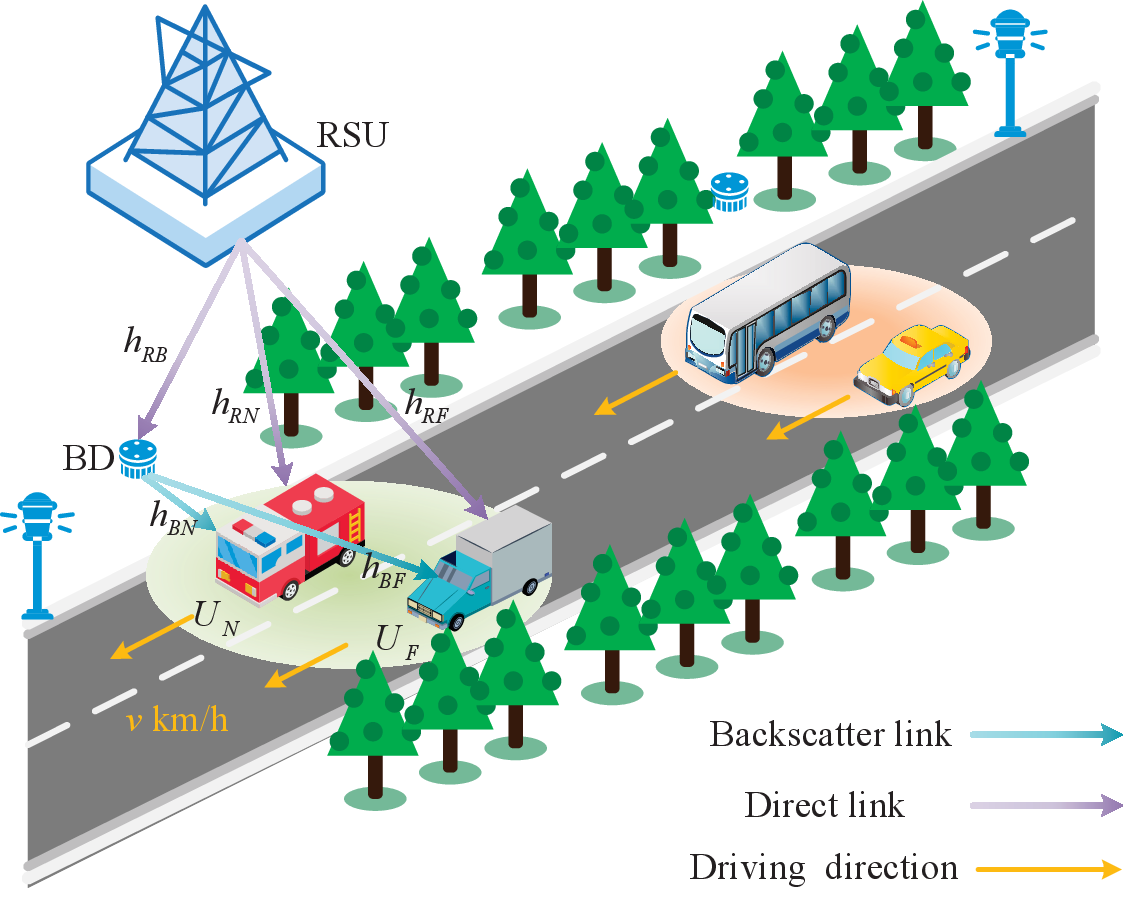}
	\caption{System model.} 	
	\label{system_model}
\end{figure}
\vspace*{-0.5cm}
\subsection{Transmission Model}
Utilizing the NOMA protocol, the RSU transmits a superimposed signal $s(t)=\sqrt{Pa_N}s_N(t)+\sqrt{Pa_F}s_F(t)$ to $U_N$ and $U_F$, where $P$ is the total transmit power; $s_N$ and $s_F$ respectively denote the intended signals for $U_N$ and $U_F$, with $\mathbb{E}\{|s_N|^2\}=\mathbb{E}\{|s_F|^2\}=1$; $a_N$ and $a_F$ are respectively the power coefficients allocated to $s_N$ and $s_F$ with $a_N+a_F=1$ and $a_F>a_N$. 

At the same time, the BD backscatters the RSU’s signal $s$ to users with its own message $s_C$, where $\mathbb{E}\{|s_C|^2\}=1$. 
The signals received by $U_N$ and $U_F$ are respectively
given by
\begin{small}
\begin{align}
	y_N=h_{RN}(t)s(t)+h_{RB}h_{BN}(t)\beta s(t)s_C(t)+n_N,
\end{align}
\end{small}and
\begin{small}
\begin{align}
	y_F=h_{RF}(t)s(t)+h_{RB}h_{BF}(t)\beta s(t)s_C(t)+n_F,
\end{align}
\end{small}where $|\beta|\leq 1$ denotes the reflection efficiency of BD; $n_N$ and $n_F\sim\mathcal{CN}(0,\sigma^2)$ respectively represent the additive white
Gaussian noises (AWGNs) at $U_N$ and $U_F$.

Unlike  conventional NOMA, under the AmBC-NOMA protocol,  $U_N$ first decodes $s_F$, then decodes $s_N$, and finally decodes $s_C$. The above process can be achieved by using SIC technique. The received SINRs of $s_F$, $s_N$, and $s_C$ at $U_N$ can be respectively written as
\begin{small}
\begin{align}
	&\gamma_{U_N}^{s_F}\nonumber\\
		&=\frac{\mathcal{A}_N|\hat{h}_{RN}(1)|^2\gamma}{\gamma\left[
			\mathcal{G}_N|\hat{h}_{RN}(1)|^2 \!\!+\!\!\Omega_{\xi RN}\!\!+\!\!|h_{RB}|^2(\mathcal{C}_N|\hat{h}_{BN}(1)|^2\!\!+\!\!\mathcal{D}_N)\right]\!\!+\!\!1},
\end{align}
\end{small}
\begin{small}
	\begin{align}
\gamma_{U_N}^{s_N}=\frac{\mathcal{G}_N|\hat{h}_{RN}(1)|^2 \gamma}{\gamma\left[ \Omega_{\xi RN}+|h_{RB}|^2(\mathcal{C}_N|\hat{h}_{BN}(1)|^2+\mathcal{D}_N)\right]+1},
	\end{align}
\end{small}and
\begin{small}
	\begin{align}\label{define_gammaUNSC}
		\gamma_{U_N}^{s_C}=\frac{|h_{RB}|^2|\hat{h}_{BN}(1)|^2\mathcal{C}_N\gamma}{\gamma\left( \Omega_{\xi RN}+|h_{RB}|^2\mathcal{D}_N\right)+1},
	\end{align}
\end{small}where $\gamma=P/\sigma^2$ represents transmit signal-to-noise ratio (SNR), $\mathcal{A}_{N}=\rho_{RN}^{2(t-1)} a_F$, $\mathcal{G}_N=\rho_{RN}^{2(t-1)}a_N$, $\mathcal{C}_N=\rho_{BN}^{2(t-1)}\beta^2$, and $	\mathcal{D}_N=\Omega_{\xi RN}\beta^2$.

Meanwhile, $U_F$ only decodes $s_F$ by treating other signals as noise, whose SINR can be written as
\begin{small}
	\begin{align}
		&\gamma_{U_F}^{s_F}\nonumber\\
		&=\frac{\mathcal{A}_F|\hat{h}_{RF}(1)|^2 \gamma}{\gamma\left[
			\mathcal{G}_F|\hat{h}_{RF}(1)|^2 +\Omega_{\xi RF}+|h_{RB}|^2(\mathcal{C}_F|\hat{h}_{BF}(1)|^2+\mathcal{D}_F)\right]+1},
	\end{align}
\end{small}where $\mathcal{A}_{F}=\rho_{RF}^{2(t-1)} a_F$,  $\mathcal{G}_F=\rho_{RF}^{2(t-1)}a_N$, $\mathcal{C}_F=\rho_{BF}^{2(t-1)}\beta^2$, and $	\mathcal{D}_F=\Omega_{\xi BF}\beta^2$.

\begin{figure}[t]
	\centering
	\includegraphics[width=2.3in]{./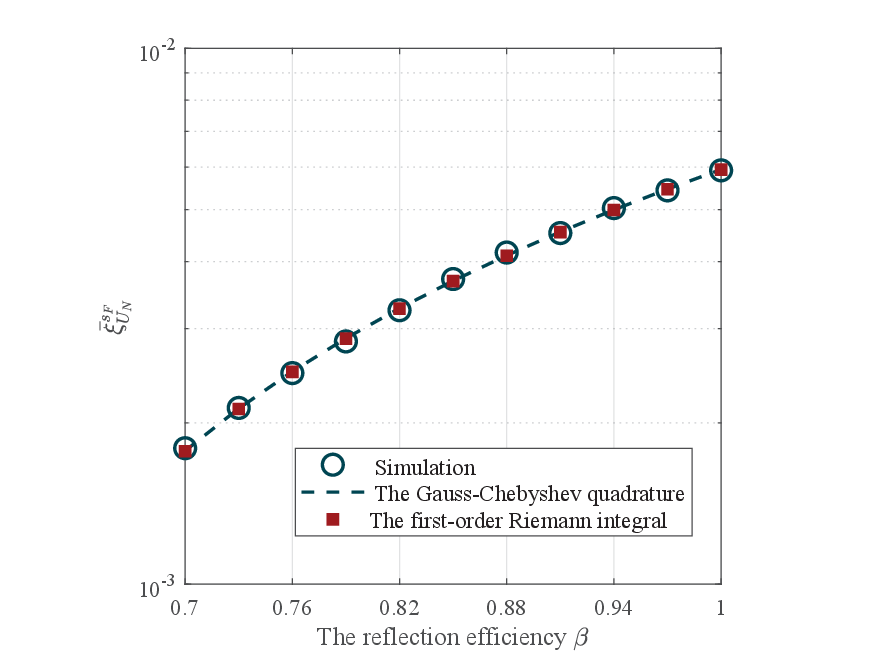}
	\caption{Illustration of the accuracy of the Gauss-Chebyshev quadrature and the first-order Riemann integral  approximations.} 	
	\label{accuracy_approximation}
\end{figure}

\begin{figure*}[htbp] 
	\begin{minipage}[t]{0.34\linewidth} 
		\centering
		\includegraphics[width=2.3in]{./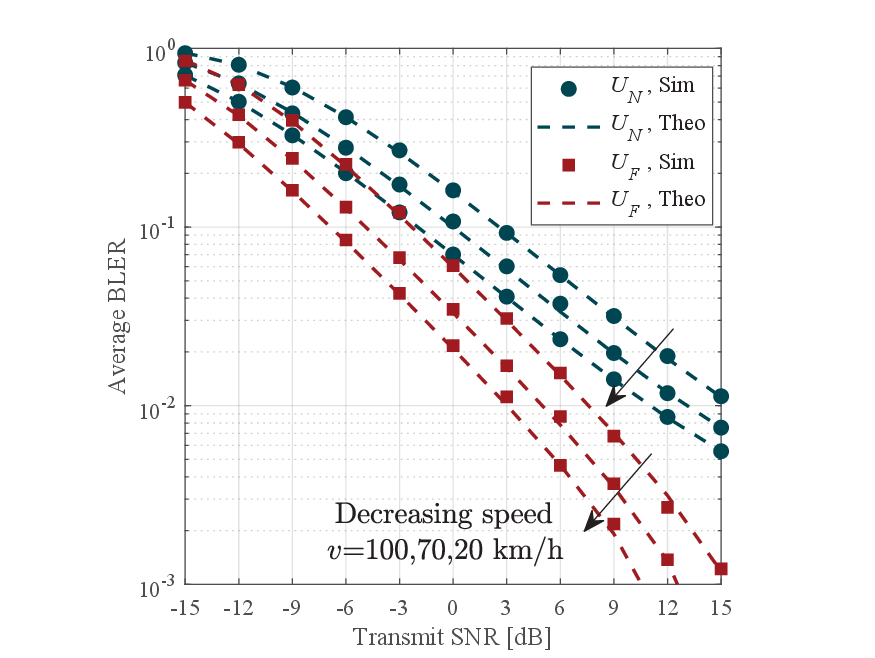} 
		\caption{Average BLERs  versus transmit SNR.} 	
		\label{abler_vs_SNR}
	\end{minipage}%
	\begin{minipage}[t]{0.34\linewidth}
		\centering
		\includegraphics[width=2.3in]{./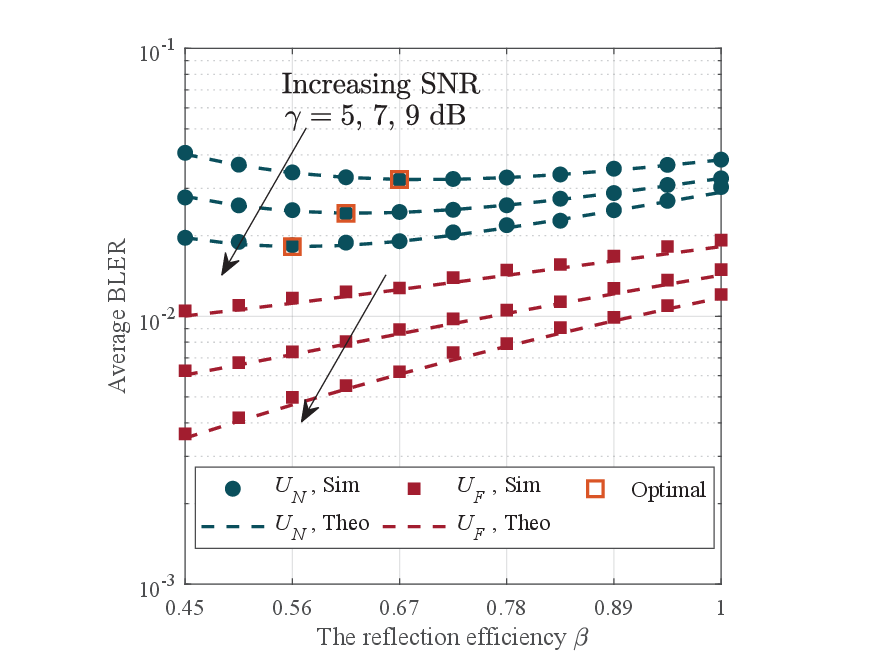}
		\caption{Average BLERs  versus reflection efficiency.} 	
		\label{abler_vs_beta}
	\end{minipage}%
	\begin{minipage}[t]{0.34\linewidth}
		\centering
		\includegraphics[width=2.3in]{./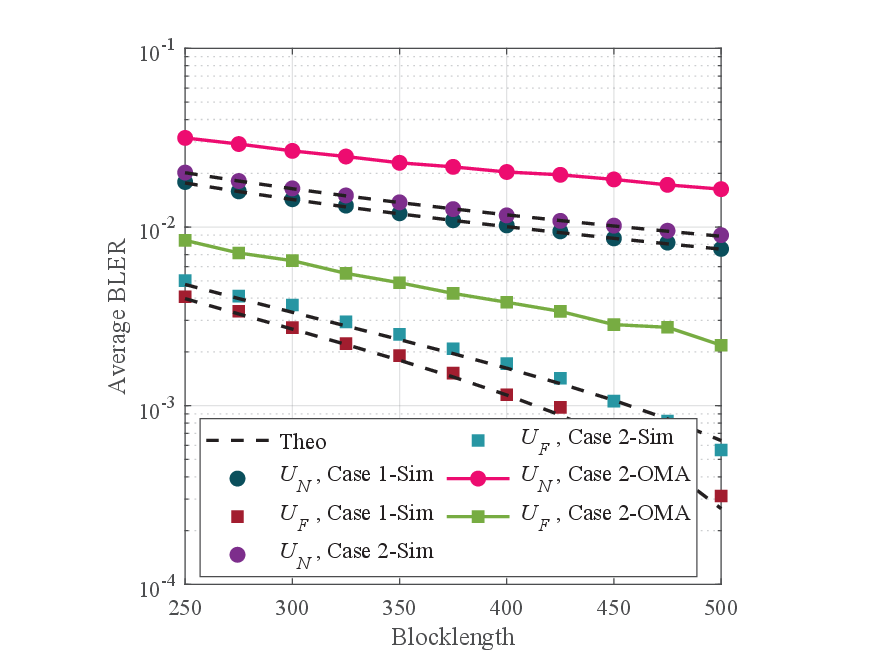}
		\caption{Average BLERs versus blocklength.} 	
		\label{abler_vs_blocklength}
	\end{minipage}
\end{figure*}

\vspace*{-0.3cm}
\section{The Statistics of SINRs}
To obtain the theoretical expressions for BLER, we have to first explore the statistics of  received SINRs. Specifically,  the exact expressions of CDFs for different SINRs are summarized in the following lemmas.
\begin{lemma}\label{CDF_of_gammaUm_sF}
	The CDF of received SINR  $\gamma_{U_m}^{s_F}$ ($m\in\{N,F\}$) can be derived as
	\begin{small}
		\begin{align}\label{F_gammaUm_sF}
	F_{\gamma_{U_m}^{s_F}}(u)=\begin{cases}
			1+\chi_{1m}(u){\rm Ei}(-\chi_{2m}(u))\exp(\chi_{3m}(u)),& u<\frac{\mathcal{A}_m}{\mathcal{G}_m},\\
				1,& u\geq\frac{\mathcal{A}_m}{\mathcal{G}_m},
			\end{cases}			
		\end{align}
	\end{small}where	 $	\chi_{1m}(u)\!=\!\frac{\hat{\Omega}_{Rm}(\mathcal{A}_m-u\mathcal{G}_m)}{u\Omega_{RB}\hat{\Omega}_{Bm}\mathcal{C}_m}$, $	\chi_{2m}(u)\!=\!\chi_{1m}(\frac{u\Omega_{RB}\mathcal{D}_m}{\hat{\Omega}_{Rm}(\mathcal{A}_m-u\mathcal{G}_m)}+1)$, and
	$\chi_{3m}(u)\!\!=\!\!\chi_{2m}-\frac{u\Omega_{\xi Rm}}{\hat{\Omega}_{Rm}(\mathcal{A}_m-u\mathcal{G}_m)}-\frac{u}{\hat{\Omega}_{Rm}\gamma(\mathcal{A}_m-u\mathcal{G}_m)}.\label{chi3}$ 
\end{lemma}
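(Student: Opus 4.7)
The plan is to start from $F_{\gamma_{U_m}^{s_F}}(u)=\Pr(\gamma_{U_m}^{s_F}\leq u)$ and exploit the fact that $X\triangleq|\hat{h}_{Rm}(1)|^2$, $Y\triangleq|\hat{h}_{Bm}(1)|^2$, and $Z\triangleq|h_{RB}|^2$ are mutually independent exponential random variables with means $\hat{\Omega}_{Rm}$, $\hat{\Omega}_{Bm}$, and $\Omega_{RB}$, respectively. Cross-multiplying the SINR inequality and collecting the $X$-terms yields $(\mathcal{A}_m-u\mathcal{G}_m)\gamma X\leq u\gamma[\Omega_{\xi Rm}+Z(\mathcal{C}_m Y+\mathcal{D}_m)]+u$. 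The sign of the coefficient of $X$ produces the dichotomy in (\ref{F_gammaUm_sF}): for $u\geq\mathcal{A}_m/\mathcal{G}_m$, the left-hand side is non-positive while the right-hand side is strictly positive, so the event holds with probability one and $F_{\gamma_{U_m}^{s_F}}(u)=1$.

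For $u<\mathcal{A}_m/\mathcal{G}_m$, I divide through by the positive factor $\gamma(\mathcal{A}_m-u\mathcal{G}_m)$ to isolate $X$, condition on $(Y,Z)$, and use $\Pr(X\leq t)=1-e^{-t/\hat{\Omega}_{Rm}}$. Setting $\alpha\triangleq u/[\hat{\Omega}_{Rm}(\mathcal{A}_m-u\mathcal{G}_m)]$, this yields
\begin{equation*}
F_{\gamma_{U_m}^{s_F}}(u)=1-e^{-\alpha\Omega_{\xi Rm}-\alpha/\gamma}\,\mathbb{E}_{Y,Z}\!\left[e^{-\alpha Z(\mathcal{C}_m Y+\mathcal{D}_m)}\right].
\end{equation*}
Integrating out $Y$ first (a standard exponential moment-generating function evaluation) collapses the $Y$-dependent factor to $1/(1+\alpha\mathcal{C}_m\hat{\Omega}_{Bm}Z)$, leaving a one-dimensional integral $\int_0^{\infty}\frac{e^{-z(\alpha\mathcal{D}_m+1/\Omega_{RB})}}{1+\alpha\mathcal{C}_m\hat{\Omega}_{Bm}z}\,\frac{dz}{\Omega_{RB}}$.

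The critical step is to convert this integral into an exponential-integral function. I will substitute $w=1+\alpha\mathcal{C}_m\hat{\Omega}_{Bm}z$ and invoke $\int_1^{\infty}e^{-\mu w}w^{-1}dw=-\text{Ei}(-\mu)$ from \cite{gradshteyn2014table}, with $\mu=(\alpha\mathcal{D}_m+1/\Omega_{RB})/(\alpha\mathcal{C}_m\hat{\Omega}_{Bm})$. A short algebraic check then verifies that the prefactor of $\text{Ei}(\cdot)$ collapses to exactly $\chi_{1m}(u)$, the argument becomes $-\chi_{2m}(u)$ (since $\mu=\chi_{1m}(u)+\mathcal{D}_m/(\mathcal{C}_m\hat{\Omega}_{Bm})=\chi_{2m}(u)$), and the collected exponential prefactor (including the $e^{\mu}$ generated by shifting $w\mapsto w-1$) equals $\exp(\chi_{3m}(u))$, completing the first branch of (\ref{F_gammaUm_sF}).

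I do not expect a conceptual obstacle; the entire argument is an exponential-CDF conditioning followed by two successive single integrations. The only delicate bookkeeping lies in verifying that $\mu>0$ throughout $u<\mathcal{A}_m/\mathcal{G}_m$, so that $-\text{Ei}(-\mu)$ is a genuine positive quantity and the final CDF lies in $[0,1]$, and in matching the post-substitution constants to the paper's compact definitions of $\chi_{1m}$, $\chi_{2m}$, and $\chi_{3m}$.
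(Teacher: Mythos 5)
Your proposal is correct and follows essentially the same route as the paper's proof: condition on the exponential variables, integrate out $|\hat{h}_{Rm}(1)|^2$ first, then $|\hat{h}_{Bm}(1)|^2$, and reduce the remaining integral over $|h_{RB}|^2$ to an exponential-integral (the paper cites [\citen{gradshteyn2014table}, Eq.\ (3.352.4)] where you perform the substitution $w=1+\alpha\mathcal{C}_m\hat{\Omega}_{Bm}z$ by hand, which is the same computation). Your identifications $\mu=\chi_{2m}(u)$, the prefactor $\chi_{1m}(u)$, and the collected exponent $\chi_{3m}(u)$ all check out.
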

\begin{proof}
	See Appendix \ref{Proof_of_CDF_of_gammaUm_sF}.
\end{proof}
\begin{lemma}\label{CDF_of_gammaUN_sN}
	The CDF of received SINR $\gamma_{U_N}^{s_N}$ can be derived as
	\begin{small}
		\begin{align}\label{F_gammaUN_sN}
			F_{\gamma_{U_N}^{s_N}}(u)=	1+\Theta_{1N}(u){\rm Ei}(-\Theta_{2N}(u))\exp(\Theta_{3N}(u)),		
		\end{align}
	\end{small}where $	\Theta_{1N}(u)=\frac{\hat{\Omega}_{RN}\mathcal{G}_N}{u\Omega_{RB}\hat{\Omega}_{BN}\mathcal{C}_N}$, $	\Theta_{2N}(u)=\Theta_{1N}(\frac{u\Omega_{RB}\mathcal{D}_N}{\hat{\Omega}_{RN}\mathcal{G}_N}+1)$, and $	\Theta_{3N}(u)=\Theta_{2N}-\frac{u\Omega_{\xi RN}}{\hat{\Omega}_{RN}\mathcal{G}_N}-\frac{u}{\hat{\Omega}_{RN}\gamma\mathcal{G}_N}.\label{Theta3}$
\end{lemma}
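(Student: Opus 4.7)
The plan is to follow essentially the same template as the proof of Lemma~\ref{CDF_of_gammaUm_sF}, noting that the calculation is slightly simpler here because the numerator $\mathcal{G}_N|\hat{h}_{RN}(1)|^2\gamma$ carries no $u$-dependent subtraction; accordingly there is no threshold analogous to $\mathcal{A}_m/\mathcal{G}_m$, and the CDF is a single analytic expression valid for every $u\geq 0$. First I would introduce the three independent exponential variables $X=|\hat{h}_{RN}(1)|^2$, $Y=|h_{RB}|^2$, $Z=|\hat{h}_{BN}(1)|^2$ with means $\hat{\Omega}_{RN}$, $\Omega_{RB}$, $\hat{\Omega}_{BN}$, respectively, and rearrange $\gamma_{U_N}^{s_N}<u$ into a linear upper bound on $X$:
\begin{align*}
X<\frac{u\left[\Omega_{\xi RN}+Y(\mathcal{C}_N Z+\mathcal{D}_N)\right]}{\mathcal{G}_N}+\frac{u}{\gamma\mathcal{G}_N}.
\end{align*}

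Next I would evaluate the conditional CDF $F_{\gamma_{U_N}^{s_N}}(u\mid Y,Z)$ via the exponential CDF of $X$ and factor out the $(Y,Z)$-independent piece $\exp\!\bigl(-u\Omega_{\xi RN}/(\hat{\Omega}_{RN}\mathcal{G}_N)-u/(\hat{\Omega}_{RN}\gamma\mathcal{G}_N)\bigr)$. Averaging first over $Z$ is just the Laplace transform of an exponential at $s=u\mathcal{C}_N Y/(\hat{\Omega}_{RN}\mathcal{G}_N)$, producing the factor $1/(1+\beta Y)$ with $\beta=u\mathcal{C}_N\hat{\Omega}_{BN}/(\hat{\Omega}_{RN}\mathcal{G}_N)$. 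The remaining expectation over $Y$ reduces to the one-dimensional integral
\begin{align*}
\int_0^\infty\frac{1}{\Omega_{RB}}\,\frac{e^{-\alpha y}}{1+\beta y}\,dy,\qquad \alpha=\frac{1}{\Omega_{RB}}+\frac{u\mathcal{D}_N}{\hat{\Omega}_{RN}\mathcal{G}_N}.
\end{align*}

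This integral is the only nontrivial analytical step: the substitution $w=1+\beta y$ converts it to $\beta^{-1}e^{\alpha/\beta}\int_{1}^{\infty}w^{-1}e^{-\alpha w/\beta}\,dw$, which by the standard definition of the exponential integral equals $-\beta^{-1}e^{\alpha/\beta}\,{\rm Ei}(-\alpha/\beta)$. Direct algebra then verifies the identifications $(\Omega_{RB}\beta)^{-1}=\Theta_{1N}(u)$ and $\alpha/\beta=\Theta_{2N}(u)$, and absorbing the previously factored exponential into the exponent $\Theta_{3N}(u)=\Theta_{2N}(u)-u\Omega_{\xi RN}/(\hat{\Omega}_{RN}\mathcal{G}_N)-u/(\hat{\Omega}_{RN}\gamma\mathcal{G}_N)$ yields the claimed closed form (\ref{F_gammaUN_sN}). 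The main obstacle is purely bookkeeping: tracking the sign that pulls the minus out of ${\rm Ei}$ so that the final CDF appears as $1+\cdots$ rather than $1-\cdots$, and matching the three algebraic forms of $\Theta_{1N},\Theta_{2N},\Theta_{3N}$ declared in the statement.
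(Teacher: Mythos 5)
Your proposal is correct and follows essentially the same route as the paper's (omitted) proof, which mirrors Appendix~\ref{Proof_of_CDF_of_gammaUm_sF}: condition on $|h_{RB}|^2$ and $|\hat{h}_{BN}(1)|^2$, use the exponential CDF of $|\hat{h}_{RN}(1)|^2$, integrate out $|\hat{h}_{BN}(1)|^2$ to get the rational factor, and evaluate the final integral over $|h_{RB}|^2$ via the exponential-integral identity (the paper cites [\citen{gradshteyn2014table}, Eq.~(3.352.4)] where you perform the substitution explicitly). Your observation that the threshold $\mathcal{A}_m/\mathcal{G}_m$ disappears here, and your identifications of $\Theta_{1N}$, $\Theta_{2N}$, $\Theta_{3N}$, all check out.
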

\begin{proof}
The proof  is similar to  Appendix \ref{Proof_of_CDF_of_gammaUm_sF}, and thus omitted here due to space limitation.
\end{proof}
\begin{lemma}\label{CDF_of_gammaUN_sC}
	The CDF of received SINR $\gamma_{U_N}^{s_C}$ can be derived as
	\begin{small}
		\begin{align}\label{F_gammaUN_sC}
			F_{\gamma_{U_N}^{s_C}}(u)=	1-\varpi_{1N}(u)K_1(\varpi_{1N}(u))\exp(-\varpi_{2N}(u)),		
		\end{align}
	\end{small}where $\varpi_{1N}(u)=2\sqrt{\frac{u(\gamma\Omega_{\xi RN}+1)}{\hat{\Omega}_{BN}\Omega_{RB}\mathcal{C}_N\gamma}}$ and $\varpi_{2N}(u)=\frac{u\mathcal{D}_N}{\hat{\Omega}_{BN}\mathcal{C}_N}$.
\end{lemma}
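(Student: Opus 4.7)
The plan is to reduce $F_{\gamma_{U_N}^{s_C}}(u)$ to an iterated expectation over two independent exponential random variables and then recognize a standard modified-Bessel integral. Set $X \triangleq |h_{RB}|^2$ and $Y \triangleq |\hat{h}_{BN}(1)|^2$, with $X$ exponential of mean $\Omega_{RB}$ and $Y$ exponential of mean $\hat{\Omega}_{BN}$, independent. Starting from (\ref{define_gammaUNSC}), the event $\{\gamma_{U_N}^{s_C} < u\}$ rearranges to $X(\mathcal{C}_N Y - u\mathcal{D}_N)\gamma < u(\gamma\Omega_{\xi RN} + 1)$. Since the right-hand side is strictly positive, the event holds trivially whenever $Y \leq u\mathcal{D}_N/\mathcal{C}_N$ (where the coefficient of $X$ is non-positive), and otherwise gives an explicit upper bound on $X$.

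Conditioning on $Y$ and using $F_X(x) = 1 - \exp(-x/\Omega_{RB})$ then yields
\begin{align*}
F_{\gamma_{U_N}^{s_C}}(u) = F_Y\!\left(\tfrac{u\mathcal{D}_N}{\mathcal{C}_N}\right) + \int_{u\mathcal{D}_N/\mathcal{C}_N}^{\infty} \left[1 - \exp\!\left(-\frac{u(\gamma\Omega_{\xi RN}+1)}{\Omega_{RB}\gamma(\mathcal{C}_N y - u\mathcal{D}_N)}\right)\right] f_Y(y)\, dy.
\end{align*}
The ``$1$'' inside the bracket cancels against $F_Y(u\mathcal{D}_N/\mathcal{C}_N)$ and contributes unity, and the substitution $z = \mathcal{C}_N y - u\mathcal{D}_N$ reduces what remains to $\exp(-u\mathcal{D}_N/(\hat{\Omega}_{BN}\mathcal{C}_N))$ times an integral of the form $\int_0^\infty \exp(-a/z - bz)\, dz$ with $a = u(\gamma\Omega_{\xi RN}+1)/(\Omega_{RB}\gamma)$ and $b = 1/(\hat{\Omega}_{BN}\mathcal{C}_N)$.

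Invoking the textbook identity $\int_0^\infty \exp(-a/z - bz)\, dz = 2\sqrt{a/b}\, K_1(2\sqrt{ab})$ from \cite{gradshteyn2014table} finishes the computation: the pre-factor combines into $2\sqrt{ab} = \varpi_{1N}(u)$, the argument of the Bessel function becomes $\varpi_{1N}(u)$, and the leftover exponential is exactly $\exp(-\varpi_{2N}(u))$. Rearranging yields (\ref{F_gammaUN_sC}).

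The one delicate point is the sign change of the coefficient of $X$ at $Y = u\mathcal{D}_N/\mathcal{C}_N$: a direct joint integration without splitting on this threshold mis-specifies the $X$-domain and loses the $\exp(-\varpi_{2N}(u))$ factor. Once the case split is handled cleanly, the change of variables and the Bessel-function identity are mechanical.
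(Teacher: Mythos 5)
Your proof is correct and follows essentially the same route as the paper's: an iterated integral over the two independent exponential variates $|h_{RB}|^2$ and $|\hat{h}_{BN}(1)|^2$, reduced to the identity $\int_0^\infty \exp(-a/z-bz)\,\mathrm{d}z = 2\sqrt{a/b}\,K_1(2\sqrt{ab})$ (Gradshteyn--Ryzhik 3.471.9). The only difference is that you condition on $|\hat{h}_{BN}(1)|^2$ and integrate the CDF of $|h_{RB}|^2$, which forces the case split at $Y=u\mathcal{D}_N/\mathcal{C}_N$ and the substitution $z=\mathcal{C}_N y-u\mathcal{D}_N$, whereas the paper integrates out $|\hat{h}_{BN}(1)|^2$ first (conditioned on $|h_{RB}|^2=y$), which lands directly on the Bessel-type integral in $y$ with no case split or change of variables.
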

\begin{proof}
	See  Appendix \ref{Proof_of_CDF_of_gammaUN_sC}.
\end{proof}

\section{BLER Analysis}
In this section, we present the BLER analysis of the proposed system. 
 For convenience, $R_{s_i}$ bit/s/Hz  ($i\in\{C,N,F\}$)  is used to denote the maximum achievable rate of $s_i$, with a blocklength of $\mathcal{L}_{s_i}$ bits. The instantaneous BLER corresponding to the SINR $\gamma_{U_m}^{s_i}$ ($m\in\{N,F\}$) decoding $s_i$  in SPC ($\mathcal{L}_{s_i}>100$ bits) can be expressed as \cite{yang2014quasi}
 \begin{small}
\begin{align}\label{instantaneous_BLER}
	\xi_{U_m}^{s_i}=Q\left(\frac{\mathcal{C}(\gamma_{U_m}^{s_i})-R_{s_i}}{\sqrt{V(\gamma_{U_m}^{s_i})/\mathcal{L}_{s_i}}}\right),
\end{align}
 \end{small}where $\mathcal{C}(\gamma_{U_m}^{s_i})=\log_2(1+\gamma_{U_m}^{s_i})$ is the Shannon capacity, and $V(x)=1/{(\ln 2)^2}\left(1-{1}/{(1+x)^2}\right)$ represents the channel dispersion.  Since obtaining a closed-form solution for the current form of $\xi_{U_m}^{s_i}$ is highly challenging, we adopt the linear approach described in \cite{makki2014finite} to tightly approximate the above $Q$-function, i.e., (\ref{instantaneous_BLER}), as
 \begin{small}
\begin{align}
	Z_{k,i}(\gamma_{U_m}^{s_i})=\begin{cases}
		1,&\gamma_{U_m}^{s_i}<\vartheta_i,\\
		0.5-\eta_i\sqrt{\mathcal{L}_{s_i}}(\gamma_{U_m}^{s_i}-\psi_i),&\gamma_{k,i}\in[\vartheta_i,\varsigma_i],\\
		0,&\gamma_{U_m}^{s_i}>\varsigma_i,
	\end{cases}
\end{align}
 \end{small}where
$\eta_i=1/\sqrt{2\pi(2^{2R_{s_i}}-1)}$, $\psi_i=2^{R_{s_i}}-1$,
$ \vartheta_i =\psi_i-1/(2\eta_i\sqrt{\mathcal{L}_{s_i}})$, and
$\varsigma_i=\psi_i+1/(2\eta_i\sqrt{\mathcal{L}_{s_i}})$. Then, the average BLER can be calculated as
\begin{small}
\begin{align}\label{BLER_integral}
	&\bar{\xi}_{U_m}^{s_i}=\mathbb{E}[\xi_{U_m}^{s_i}]=\int_{0}^{\infty}\!\!\xi_{U_m}^{s_i}f_{\gamma_{U_m}^{s_i}}(x)\text{d}x\nonumber\\
	&\approx\int_{0}^{\infty}\!\!Z_{\gamma_{U_m}^{s_i}}(x) f_{\gamma_{U_m}^{s_i}}(x)\text{d}x\!=\!\eta_{i}\sqrt{\mathcal{L}_{s_i}}\int_{\vartheta_i}^{\varsigma_i}F_{\gamma_{U_m}^{s_i}}(x)\text{d}x.
\end{align}
\end{small}Recall that $F_{\gamma_{U_m}^{s_i}}(x)$ involves the product of ${\rm Ei}(\cdot)$ (or $K_1(\cdot)$) and $\exp(\cdot)$, making it challenging to obtain a closed-form expression for the integral. To evaluate this integral, we can employ the first-order Riemann integral\footnote{This method can be used when the integral interval $\varsigma_i-\vartheta_i=\sqrt{\frac{2\pi}{\mathcal{L}_{s_i}}(2^{2R_{s_i}}-1)}$ is small. } \cite{ho2021short}, i.e.,   $\int_{a}^{b}f(x)\text{d}x\approx(b-a)f\left({(a+b)}/{2}\right)$, then (\ref{BLER_integral}) becomes
$
	\bar{\xi}_{U_m}^{s_i}\approx F_{\gamma_{U_m}^{s_i}}\left(\psi_i\right),
$
or  we can employ  the Gauss-Chebyshev quadrature method \cite{gradshteyn2014table}, resulting in
\begin{small}
\begin{align}
	\bar{\xi}_{U_m}^{s_i}\approx\sum_{i=1}^{n}\frac{\pi}{2n}F_{\gamma_{U_m}^{s_i}}\left(\frac{z_i+1}{2\eta_i\sqrt{\mathcal{L}_{s_i}}}+\vartheta_i\right)\sqrt{1-z_i^2},
\end{align}
\end{small}where $n$ represents the calculation accuracy, and
$
	z_i=\cos\left(\frac{2i-1}{2n}\pi\right).
$

The accuracy of these two approximations  is demonstrated in Fig. \ref{accuracy_approximation}, shown at the top of  previous page. It can be seen that both approximations exhibit high precision. Therefore, we herein adopt the simpler approximation method, i.e., the  first-order Riemann integral, to obtain the approximations.  Subsequently,  we can achieve the following theorems.
\begin{theorem}\label{ABLER_UN}
Based on the SIC operation,	the approximated expression for end-to-end (e2e) average   BLER at $U_N$ can be derived as
\begin{small}
	\begin{align}\label{varepsilonU_N_case1}
		\varepsilon_{U_N}\approx&	1-\chi_{1N}(\psi_F)\Theta_{1N}(\psi_N)\varpi_{1N}(\psi_C){\rm Ei}(-\chi_{2N}(\psi_F))\nonumber\\
		&\times{\rm Ei}(-\Theta_{2N}(\psi_N))K_1(\varpi_{1N}(\psi_C))\nonumber\\
		&\times\exp\left(\chi_{3N}(\psi_F)+\Theta_{3N}(\psi_N)-\varpi_{2N}(\psi_C)\right),	
	\end{align}
\end{small}when $ \psi_F<\frac{\mathcal{A}_m}{\mathcal{G}_m}$; otherwise, $\varepsilon_{U_N}$ can be derived as
\begin{small}
	\begin{align}\label{varepsilonU_N_case2}
		\varepsilon_{U_N}\approx&	1-\Theta_{1N}(\psi_N)\varpi_{1N}(\psi_C){\rm Ei}(-\Theta_{2N}(\psi_N))\nonumber\\
		&\times K_1(\varpi_{1N}(\psi_C))\exp\left(\Theta_{3N}(\psi_N)-\varpi_{2N}(\psi_C)\right),
	\end{align}
\end{small}when $ \psi_F\geq\frac{\mathcal{A}_m}{\mathcal{G}_m}$;
where  $\chi_{1N}$, $\chi_{2N}$, $\chi_{3N}$, $\Theta_{1N}$, $\Theta_{2N}$, $\Theta_{3N}$, $\varpi_{1N}$, and $\varpi_{2N}$  have been defined in Lemmas \ref{CDF_of_gammaUm_sF}, \ref{CDF_of_gammaUN_sN}, and \ref{CDF_of_gammaUN_sC}. 
\end{theorem}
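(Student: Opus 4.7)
The plan is to exploit the SIC structure at $U_N$. Since the receiver must successively decode $s_F$, then $s_N$, then $s_C$, the end-to-end decoding success is the conjunction of three per-step successes. Treating these three stages as (approximately) independent — the standard simplification used in SIC-based FBL analyses — the e2e average BLER factors as
\begin{small}
\begin{align*}
\varepsilon_{U_N}\approx 1-\prod_{i\in\{F,N,C\}}\bigl(1-\bar{\xi}_{U_N}^{s_i}\bigr).
\end{align*}
\end{small}
I would then invoke the first-order Riemann midpoint rule already motivated just above the theorem, which collapses each per-step average BLER to $\bar{\xi}_{U_N}^{s_i}\approx F_{\gamma_{U_N}^{s_i}}(\psi_i)$, reducing the whole problem to algebraic substitution of the three CDFs in Lemmas~\ref{CDF_of_gammaUm_sF}, \ref{CDF_of_gammaUN_sN}, and \ref{CDF_of_gammaUN_sC}.

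Next, I would write each complementary CDF explicitly. Under $\psi_F<\mathcal{A}_N/\mathcal{G}_N$, Lemma~\ref{CDF_of_gammaUm_sF} (with $m=N$) gives $1-F_{\gamma_{U_N}^{s_F}}(\psi_F)=-\chi_{1N}(\psi_F)\,\mathrm{Ei}(-\chi_{2N}(\psi_F))\exp(\chi_{3N}(\psi_F))$; Lemma~\ref{CDF_of_gammaUN_sN} gives the corresponding $\Theta$-family expression at $\psi_N$; and Lemma~\ref{CDF_of_gammaUN_sC} gives $\varpi_{1N}(\psi_C)K_1(\varpi_{1N}(\psi_C))\exp(-\varpi_{2N}(\psi_C))$. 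Multiplying the three factors, the two minus signs attached to the $\chi$ and $\Theta$ pieces cancel, the three exponentials merge into a single $\exp(\chi_{3N}+\Theta_{3N}-\varpi_{2N})$, and after regrouping prefactors with the two $\mathrm{Ei}$'s and the single $K_1$, equation \eqref{varepsilonU_N_case1} drops out. The case split is then forced by the piecewise definition of $F_{\gamma_{U_N}^{s_F}}$: once $\psi_F\geq \mathcal{A}_N/\mathcal{G}_N$ the $s_F$ branch degenerates (the SINR for $s_F$ is saturated below threshold), only the $s_N$ and $s_C$ factors survive in the product, and the same algebraic tidy-up produces equation \eqref{varepsilonU_N_case2}.

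The main obstacle I anticipate is bookkeeping rather than insight. Three points need care: (i) signs, because $\mathrm{Ei}(-x)<0$ for $x>0$, so the two leading minus signs of the $s_F$ and $s_N$ complementary CDFs have to combine cleanly with the positive $s_C$ factor; (ii) the independence assumption across SIC stages must be flagged as the approximation legitimizing the product factorization, since $\gamma_{U_N}^{s_F}$, $\gamma_{U_N}^{s_N}$, and $\gamma_{U_N}^{s_C}$ all share $\hat{h}_{RN}(1)$, $\hat{h}_{BN}(1)$ and $h_{RB}$ and are therefore strictly dependent; and (iii) one should briefly justify that the integration windows $[\vartheta_i,\varsigma_i]$ for $i\in\{F,N,C\}$ are narrow enough for the Riemann midpoint substitution to be valid simultaneously for all three signals, a claim already supported numerically by Fig.~\ref{accuracy_approximation} but worth restating before asserting the final closed form.
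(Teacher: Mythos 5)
Your route is essentially the paper's own (Appendix~\ref{Proof_of_ABLER_UN}): the SIC chain is written as $\varepsilon_{U_N}=\bar{\xi}_{U_N}^{s_F}+(1-\bar{\xi}_{U_N}^{s_F})\bar{\xi}_{U_N}^{s_N}+(1-\bar{\xi}_{U_N}^{s_F})(1-\bar{\xi}_{U_N}^{s_N})\bar{\xi}_{U_N}^{s_C}$, which is algebraically identical to your $1-\prod_i(1-\bar{\xi}_{U_N}^{s_i})$; each $\bar{\xi}_{U_N}^{s_i}$ is then collapsed to $F_{\gamma_{U_N}^{s_i}}(\psi_i)$ by the first-order Riemann rule and the three Lemmas are substituted. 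Your sign bookkeeping for the main case is correct and reproduces \eqref{varepsilonU_N_case1}; your remark that the product-of-averages step hides a dependence across SIC stages (the three SINRs share $\hat{h}_{RN}(1)$, $\hat{h}_{BN}(1)$, $h_{RB}$) is a legitimate caveat that the paper itself passes over silently.

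The only substantive discrepancy is your treatment of the second case. If $\psi_F\geq\mathcal{A}_N/\mathcal{G}_N$ then $F_{\gamma_{U_N}^{s_F}}(\psi_F)=1$, so $\bar{\xi}_{U_N}^{s_F}\approx 1$ and the SIC recursion collapses to $\varepsilon_{U_N}\approx 1$, not to \eqref{varepsilonU_N_case2}. Your alternative reading---simply deleting the $s_F$ factor from the product---gives
\begin{small}
\begin{align*}
1-\bigl(1-\bar{\xi}_{U_N}^{s_N}\bigr)\bigl(1-\bar{\xi}_{U_N}^{s_C}\bigr)
=1+\Theta_{1N}(\psi_N)\varpi_{1N}(\psi_C){\rm Ei}(-\Theta_{2N}(\psi_N))K_1(\varpi_{1N}(\psi_C))\exp\bigl(\Theta_{3N}(\psi_N)-\varpi_{2N}(\psi_C)\bigr),
\end{align*}
\end{small}which differs in the sign of the second term from \eqref{varepsilonU_N_case2} (the stated formula exceeds $1$ since ${\rm Ei}(-x)<0$). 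So the ``same algebraic tidy-up'' does not in fact produce \eqref{varepsilonU_N_case2}; you should either derive $\varepsilon_{U_N}\approx 1$ for this regime from the recursion, or flag the inconsistency, rather than assert agreement.
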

\begin{proof}
	See Appendix \ref{Proof_of_ABLER_UN}.	
\end{proof}

\begin{theorem}
The approximated expression for e2e average BLER at $U_F$ can be derived as
\begin{small}
\begin{align}\label{final_of_varepsilon_UF_SF}
	\varepsilon_{U_F}\approx1+\chi_{1F}(\psi_F){\rm Ei}(-\chi_{2F}(\psi_F))\exp(\chi_{3F}(\psi_F)),
\end{align}
\end{small}when $\psi_F<\frac{\mathcal{A}_m}{\mathcal{G}_m}$; otherwise, $\varepsilon_{U_F}=1$; $\chi_{1F}$, $\chi_{2F}$, and $\chi_{3F}$ have been defined in Lemma \ref{CDF_of_gammaUm_sF}.
\end{theorem}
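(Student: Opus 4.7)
The plan is straightforward because, unlike $U_N$, the far user $U_F$ only needs to decode its own intended signal $s_F$, treating the other two signals (the near user's $s_N$ and the backscatter message $s_C$) as noise, as already reflected in the expression for $\gamma_{U_F}^{s_F}$ given in the system model. Therefore the e2e average BLER at $U_F$ collapses to a single-stage quantity, namely $\varepsilon_{U_F}=\bar{\xi}_{U_F}^{s_F}$, and no SIC chaining is required. This makes the argument substantially simpler than the one in Theorem~\ref{ABLER_UN}.

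First I would invoke the general first-order Riemann integral approximation established in the paragraph preceding the theorems, i.e., $\bar{\xi}_{U_m}^{s_i}\approx F_{\gamma_{U_m}^{s_i}}(\psi_i)$, specialized to $m=F$, $i=F$, to obtain $\varepsilon_{U_F}\approx F_{\gamma_{U_F}^{s_F}}(\psi_F)$. Then I would directly substitute the closed-form CDF from Lemma~\ref{CDF_of_gammaUm_sF} with $m=F$ and $u=\psi_F$. The case split in Lemma~\ref{CDF_of_gammaUm_sF} transfers verbatim: when $\psi_F<\mathcal{A}_F/\mathcal{G}_F$, the CDF branch with the product of $\mathrm{Ei}(\cdot)$ and $\exp(\cdot)$ applies, yielding exactly \eqref{final_of_varepsilon_UF_SF}; otherwise the CDF equals $1$, giving $\varepsilon_{U_F}=1$. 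The interpretation in the degenerate case is natural: $\psi_F\geq\mathcal{A}_F/\mathcal{G}_F$ means the target rate $R_{s_F}$ is unreachable even with perfect channel realizations because of the residual interference from $s_N$, and decoding must fail.

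The only substantive step is verifying that the first-order Riemann approximation is legitimately applied to $\gamma_{U_F}^{s_F}$; this is justified because the integration interval $\varsigma_F-\vartheta_F=\sqrt{(2\pi/\mathcal{L}_{s_F})(2^{2R_{s_F}}-1)}$ is small in the SPC regime, as already argued in the footnote accompanying the general approximation, and Fig.~\ref{accuracy_approximation} corroborates the tightness numerically. No hard step is expected, so the proof can simply be stated as: apply the Riemann approximation to $\bar{\xi}_{U_F}^{s_F}=\eta_F\sqrt{\mathcal{L}_{s_F}}\int_{\vartheta_F}^{\varsigma_F}F_{\gamma_{U_F}^{s_F}}(x)\,\mathrm{d}x$, then plug in Lemma~\ref{CDF_of_gammaUm_sF}.
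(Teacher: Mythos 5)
Your proposal is correct and follows exactly the route the paper takes: $\varepsilon_{U_F}=\bar{\xi}_{U_F}^{s_F}\approx F_{\gamma_{U_F}^{s_F}}(\psi_F)$ via the first-order Riemann approximation, then substitution of Lemma~\ref{CDF_of_gammaUm_sF} with $m=F$, including the case split at $\psi_F\geq\mathcal{A}_F/\mathcal{G}_F$. Your added justification of the small integration interval and the interpretation of the degenerate case are consistent with, and slightly more explicit than, the paper's one-line proof.
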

\begin{proof}
	Since $	\varepsilon_{U_F}=\bar{\xi}_{U_F}^{s_F}\approx F_{\gamma_{U_F}^{s_F}}\left(\psi_F\right)$, 
	(\ref{final_of_varepsilon_UF_SF}) can be simply attained by substituting (\ref{F_gammaUm_sF}) into  the aforementioned equations.
\end{proof}

\vspace*{-0.5cm}

\section{Numerical Results}
In this section, we verify the accuracy of  theoretical expressions through Monte Carlo methods. Unless otherwise stated, the simulation parameters are set as follows   \cite{zheng2022overlay,khan2022energy,zhu2024rate,li2021hardware,vu2021performance}: Power allocation coefficients $a_N=0.3$ and $a_F=0.7$; time instant index $t=2$; carrier frequency $f_c=5.9$ GHz; transmitted symbol duration $T_s=0.2$ ms; vehicle speed $v=70$ km/h; blocklength of different streams $\mathcal{L}_{s_i}=\mathcal{L}_s=500$ bits; $R_{s_N}=R_{s_F}=0.1$ bit/s/Hz and $R_{s_C}=0.005$ bit/s/Hz; variances of different channels $\Omega_{RN}=20$, $\Omega_{RF}=5$, $\Omega_{RB}=1$, $\Omega_{BN}=1.5$, and $\Omega_{BF}=0.5$; error levels of different channels $\Omega_e=\Omega_{\epsilon}=0.001$; reflection efficiency $\beta=0.45$.

In Fig. \ref{abler_vs_SNR}, we present the average BLERs versus the transmit SNR $\gamma$ with different vehicle speeds $v$. The theoretical results of the average BLER performance exhibit good
agreement with the simulated ones. As seen from Fig. \ref{abler_vs_SNR}, all the curves steadily decrease  as $\gamma$ increases.  It is evident that higher speed
levels degrade the performance of both users, since mobility can seriously interfere the estimated CSI at $t$-th time instant.  On the other hand, we find that the BLER of $u_N$ is always worse than that of $u_F$, which is determined by the different SIC processes for the two users. Specifically, $u_N$ needs to successively decode three signals, while $u_F$ only needs to decode its own signal, leading to higher BLER for $u_N$.

Fig. \ref{abler_vs_beta} illustrates the relationship between average BLER and the reflection efficiency $\beta$ with different transmit SNR $\gamma$  levels.  It's evident that larger $\gamma$ values correspond to larger SINRs, resulting in a decrease in average BLER as $\gamma$ increases.  Besides, regarding the average BLER of $u_F$, it increases with increasing $\beta$ due to larger interference caused by the backscatter link. Conversely, for the average BLER of $u_N$, it initially decreases to an optimal point and then increases with increasing $\beta$. This is because with a smaller $\beta$, the power of $s_C$ backscattered by BD is lower, causing larger $\bar{\xi}_{U_N}^{s_C}$, and finally  resulting in poor BLER performance. However, as $\beta$ becomes larger, the enhancement of $\bar{\xi}_{U_N}^{s_C}$ cannot eliminate the growth of interference caused by the backscatter link in $\bar{\xi}_{U_N}^{s_N}$ and $\bar{\xi}_{U_N}^{s_F}$. Therefore, choosing an appropriate reflection efficiency $\beta$ is crucial.

The average BLERs of different schemes versus blocklength $\mathcal{L}_s$ when $\gamma=15$ dB for different cases is plotted
in Fig. \ref{abler_vs_blocklength}, with $\Omega_e=\Omega_{\epsilon}=0.001$ for Case 1 and $\Omega_e=\Omega_{\epsilon}=0.01$ for Case 2. It can be observed that with increasing blocklength, the average BLER decreases.
Clearly, AmBC-NOMA always
exhibits better average BLER performance than the OMA counterpart, corroborating its capability of providing higher SE. Therefore, employing AmBC-NOMA in high-mobility
communication is highly beneficial.
Additionally, we find that  higher levels of CSI error leads to a deterioration in the BLER performance. Hence, maintaining accurate CSI in high-mobility scenario is a critical requirement for achieving better BLER performance.

\section{Conclusions}
In this paper, we have proposed a two-user AmBC-NOMA-assisted high-mobility V2X system  over time-selective fading channels under CSI imperfections. The derived theoretical expressions for the
BLERs have effectively characterized the system's performance. Simulation results have validated the accuracy of  theoretical results and showed the impacts of different transmit SNRs, CSI imperfections,  vehicle speeds, reflection efficiency coefficients, and blocklength  on the BLER.  Finally, simulation results have shown that the proposed AmBC-NOMA system outperforms the AmBC-OMA counterpart.

\begin{appendices}
	\setcounter{equation}{0}
	\renewcommand{\theequation}{\thesection.\arabic{equation}}
	
\vspace*{-0.5cm}	
	\section{Proof of Lemma \ref{CDF_of_gammaUm_sF}}\label{Proof_of_CDF_of_gammaUm_sF}
	\setcounter{equation}{0}
Apparently, when $u\geq\frac{\mathcal{A}_m}{\mathcal{G}_m}$,  $F_{\gamma_{U_m}^{s_F}}(u)=1$, thus we turn to the derivation when $u<\frac{\mathcal{A}_m}{\mathcal{G}_m}$.
The CDF of $\gamma_{U_m}^{s_F}$ 
can be derived as
\begin{small}
\begin{align}
	&	F_{\gamma_{U_m}^{s_F}}(u)=\Pr(\gamma_{U_m}^{s_F}\leq u)=1-\Pr(\gamma_{U_m}^{s_F}> u)\nonumber\\
	&\!\!=\!\!1\!\!-\!\!\int_{0}^{\infty}\!\!\int_{0}^{\infty}\!\!\int_{\kappa_1}^{\infty}\!\!f_{|\hat{h}_{Rm}(1)|^2}(x)f_{|\hat{h}_{Bm}(1)|^2}(y)f_{|{h}_{RB}|^2}(z)\text{d}x\text{d}y\text{d}z\nonumber\\
	&\!\!\overset{(a)}{=}\!\!1\!\!-\!\!\int_{0}^{\infty}\!\!\frac{\lambda_{Rm}}{\kappa_2\hat{\Omega}_{Bm}+\hat{\Omega}_{Rm}}\exp\left(\!-\frac{\kappa_3}{\hat{\Omega}_{Rm}}\!\right)\!\!f_{|{h}_{RB}|^2}(z)\text{d}z
\end{align}
\end{small}where $(a)$	can be arrived by using [\citen{gradshteyn2014table}, Eq. (3.310)], $	\kappa_1=\frac{u\gamma(\Omega_{\xi Rm}\!+\!\mathcal{C}_m|h_{RB}|^2|\hat{h}_{Bm}(1)|^2\!+\!|h_{RB}|^2\mathcal{D}_m)+u}{(\mathcal{A}_m-u\mathcal{G}_m)\gamma}$, $\kappa_2=\frac{u\mathcal{C}_mz}{(\mathcal{A}_m-u\mathcal{G}_m)}$, and $\kappa_3=\frac{u(\sigma^2+ z\mathcal{D}_m+1/\gamma)}{(\mathcal{A}_m-u\mathcal{G}_m)}$. Finally, by applying [\citen{gradshteyn2014table}, Eq. (3.352.4)], we
arrive at (\ref{F_gammaUm_sF}), completing the proof.
	\section{Proof of Lemma \ref{CDF_of_gammaUN_sC}}\label{Proof_of_CDF_of_gammaUN_sC}
\setcounter{equation}{0}
Based on (\ref{define_gammaUNSC}), the CDF of $\gamma_{U_N}^{s_C}$ can be derived as
\begin{small}
\begin{align}
	&	F_{\gamma_{U_N}^{s_C}}(u)=\Pr(\gamma_{U_N}^{s_C}\leq u)=1-\Pr(\gamma_{U_N}^{s_C}\geq u)\nonumber\\
	&=1-\int_{0}^{\infty}\int_{\kappa_4}^{\infty}f_{|\hat{h}_{BN}(1)|^2}(x)f_{|{h}_{RB}|^2}(y)\text{d}x\text{d}y\nonumber\\
	&=1-\frac{\exp(-\kappa_6)}{\Omega_{RB}}\int_{0}^{\infty}\exp\left(-\frac{\kappa_5}{y}-\frac{y}{\Omega_{RB}}\right)\text{d}y,
\end{align}
\end{small}where $	\kappa_4\triangleq\frac{u\gamma(\Omega_{\xi RN}+|h_{RB}|^2\mathcal{D}_N)+u}{|h_{RB}|^2\mathcal{C}_N\gamma}$, $\kappa_5\triangleq\frac{u(\gamma\Omega_{\xi RN}+1)}{\hat{\Omega}_{BN}\mathcal{C}_N\gamma}$, and $\kappa_6=\frac{u\mathcal{D}_N}{\hat{\Omega}_{BN}\mathcal{C}_N}$.
Finally, by applying [\citen{gradshteyn2014table}, Eq. (3.471.9)], we
arrive at (\ref{F_gammaUN_sC}), completing the proof.
	\section{Proof of Theorem \ref{ABLER_UN}}\label{Proof_of_ABLER_UN}
\setcounter{equation}{0}
For convenience, let 
\begin{small}
	\begin{subequations}
		\begin{align}	
			\mathcal{F}_1(u)=&\chi_{1N}(u){\rm Ei}(-\chi_{2N}(u))\exp(\chi_{3N}(u)), \label{F1u}\\
			\mathcal{F}_2(u)=&\Theta_{1N}(u){\rm Ei}(-\Theta_{2N}(u))\exp(\Theta_{3N}(u)),\label{F2u}\\
			\mathcal{F}_3(u)=&-\varpi_{1N}(u)K_1(\varpi_{1N}(u))\exp(-\varpi_{2N}(u)).\label{F3u}
		\end{align}
	\end{subequations} 
\end{small}Apparently, we have $\bar{\xi}_{U_N}^{s_F}=1+\mathcal{F}_1(u)$, $\bar{\xi}_{U_N}^{s_N}=1+\mathcal{F}_2(u)$, and $\bar{\xi}_{U_N}^{s_C}=1+\mathcal{F}_3(u)$.	Utilizing the aforementioned equations and 
after performing some mathematical operations, $\varepsilon_{U_N}$ can be converted to
\begin{small}
\begin{align}
	\varepsilon_{U_N}&=\bar{\xi}_{U_N}^{s_F}+(1-\bar{\xi}_{U_N}^{s_F})\bar{\xi}_{U_N}^{s_N}+(1-\bar{\xi}_{U_N}^{s_F})(1-\bar{\xi}_{U_N}^{s_N})\bar{\xi}_{U_N}^{s_C},\nonumber\\
	&=1+\mathcal{F}_1(\psi_F)\mathcal{F}_2(\psi_N)\mathcal{F}_3(\psi_C)
\end{align}
\end{small}Then, by substituting (\ref{F1u}),  (\ref{F2u}), and (\ref{F3u}) into the above equation, we can derive the approximated expressions of $\varepsilon_{U_N}$, completing the proof.
\end{appendices}

\vspace*{-0.5cm}
\bibliographystyle{IEEEtran}
\bibliography{BAC_NOMA_high_mobility_BLER.bib}
\end{document}